\documentclass{CSML}

\def\dOi{11(4:15)2015}
\lmcsheading%
{\dOi}
{1--9}
{}
{}
{Feb.~23, 2015}
{Dec.~22, 2015}
{}

\ACMCCS{[{\bf Software and its engineering}]: Software organization
  and properties---Software functional properties---Formal methods;
  Software organization and properties---Software system
  structures---Software system models---Petri nets} 
\subjclass{D.2.4 Formal methods, D.2.2 Petri nets}


\usepackage{hyperref}

\usepackage{macros}

\begin{document}

\title[Structurally Cyclic Petri Nets]{Structurally Cyclic Petri Nets}

\author[F.~Drewes]{Frank Drewes\rsuper a}	
\address{{\lsuper a}Dept.\ of Computing Science, Ume\aa\ University, Ume\aa, Sweden}	
\email{drewes@cs.umu.se}  

\author[J.~Leroux]{J\'er\^ome Leroux\rsuper b}	
\address{{\lsuper b}LaBRI, CNRS, Univ. Bordeaux, Talence, France}	
\email{leroux@labri.fr}  



\keywords{Petri net, vector addition system, structural cyclicity, reachability}


\begin{abstract}
  \noindent A Petri net is structurally cyclic if every configuration is reachable
from itself in one or more steps. We show that structural cyclicity is
decidable in deterministic polynomial time. For this, we adapt the
Kosaraju's approach for the general reachability problem for Petri
nets.
\end{abstract}

\maketitle


\section{Introduction}
\label{sec:intro}
Reachability problems for Petri nets are not only famously difficult and computationally complex, but also important from an application point of view. Therefore, reachability has attracted a lot of attention. Three decades ago, the reachability problem for general Petri nets was shown to be decidable by Mayr and Kosaraju~\cite{Mayr:1984,Kos1982}, but to date no primitive recursive upper bound on its complexity is known.

One of the many papers in which variants of the problem are studied is~\cite{Leroux:2013}. There, the stronger property of reversible reachability is shown to be EXPSPACE complete. The reversible reachability problem consists in deciding if two configurations are in the same strongly connected component of the reachability graph.

A natural special case of reversible reachability is the question whether a given configuration $\vec c$ is \emph{cyclic}, i.e., whether it is reachable from itself by one or more steps. In the present paper, we show first that this problem is EXPSPACE complete as well. Then we move on to the main topic of this paper, namely the problem of \emph{structural cyclicity}. A Petri net $T$ is said to be structurally cyclic if each of its configurations is cyclic. Equivalently, $T$ is structurally cyclic if the zero configuration is reachable from itself in $T$ (by at least one step). We show that structural cyclicity can be decided in deterministic polynomial time. This is achieved by studying the set of \emph{markable indices} of $T$, i.e., those indices which, starting from the zero configuration, can be made non-zero on both forward and backward firing sequences, and the set of \emph{ultimately cyclic transitions} of $T$, i.e. transitions that occurs on a cyclic execution.

Apart from the fact that structural cyclicity seems to be a rather natural property, motivation for this work is provided by its usefulness in other areas. In fact, the questions answered in this paper were raised by ongoing work on a basic type of DAG automata in~\cite{Chiang.etAl:2015}. Let us briefly explain this connection. A DAG is a directed acyclic graph with node labels taken from a finite alphabet. A DAG automaton $A$ has a finite set of states and rules of the form $\{p_1,\dots,p_m\}\mathrel{\mathop{\rightarrow}\limits^a}\{q_1,\dots,q_n\}$, where $a$ is a node label and $\{p_1,\dots,p_m\}$ and $\{q_1,\dots,q_n\}$ are multisets of states. A run of $A$ is any assignment of states to the edges of the DAG; such a run is accepting if it is locally consistent with the rules. In other words, for each node, the label of this node together with the multisets of states on its incoming and outgoing edges must form a rule of the DAG automaton. The DAG language $L(A)$ accepted by $A$ is the set of all nonempty DAGs $D$ such that there exists an accepting run of $A$ on $D$. (Note that only nonempty DAGs are considered, because the empty DAG would always be accepted according to these definitions.)

Now, since DAGs are acyclic, a run can be considered as a top-down process that starts at the roots of the DAG and applies rules until it reaches the leaves. Changing perspective slightly, this can be used to view $A$ as a generating device that starts with an empty DAG. In each step, it applies a rule as above by taking $m$ ``dangling'' edges that carry states $p_1,\dots,p_m$, making them the incoming edges of a new node labelled $a$, and adding $n$ dangling outgoing edges to this node, which carry the states $q_1,\dots,q_n$. The process may stop whenever a DAG is obtained that does not contain any further dangling edges. Note that, since the DAG is empty at the very beginning, and thus there are no dangling edges, at least one rule of the form $\emptyset\mathrel{\mathop{\rightarrow}\limits^a}\{q_1,\dots,q_n\}$ must be applied to produce a root (and $n$ dangling edges). Likewise, termination requires the application of rules of the form $\{p_1,\dots,p_m\}\mathrel{\mathop{\rightarrow}\limits^a}\emptyset$ that produce leaves.

Now, by viewing states as dimensions (or places) of a Petri net and adding a transition for each rule of a DAG automaton $A$, one gets a Petri net $T$ which mimics the production and consumption of (states on) dangling edges. In particular, $T$ can turn the zero configuration (corresponding to the start, in which no states are available) into the zero configuration (now corresponding to a terminal situation in which all states have been consumed) if and only if at least one (nonempty) DAG is accepted by $A$. In other words, $T$ is structurally cyclic if and only if the $L(A)$ is nonempty. In this way, our main result shows that the emptiness problem for DAG automata can be solved in deterministic polynomial time. The details of this construction will be found in~\cite{Chiang.etAl:2015}.

\section{Petri Nets}
\label{sec:petrinets}
In the sequel, $d$ denotes a natural number in $\setN$, called the \emph{dimension}. A vector in $\setN^d$ is called \emph{configuration}. Configurations are ordered pointwise by $\vec{x}\leq\vec{y}$ if $\vec{x}(i)\leq \vec{y}(i)$ for every $1\leq i\leq d$. Given a configuration $\vec{c}$, we denote by $\pos{\vec{c}}$ the set of indexes $i$ in $\{1,\ldots,d\}$ such that $\vec{c}(i)>0$. A \emph{Petri net} is a finite set $T$ of pairs of configurations called \emph{transitions}. In this paper, numbers are encoded in binary. That defines the size of configurations and the size of transitions as the sum of the sizes of each component. The size of a Petri net is defined as the sum of the sizes of its transitions. 

The semantics of a Petri net is given by the binary relations $\xrightarrow{t}$ over configurations: for every transition $t\in T$ of the form $(\vec{u},\vec{v})$, we let $\vec{x}\xrightarrow{t}\vec{y}$ if there exists a configuration $\vec{z}$ such that $\vec{x}=\vec{u}+\vec{z}$ and $\vec{y}=\vec{v}+\vec{z}$, with the sum of two vectors defined componentwise. It follows that $\vec{y}=\vec{x}+\Delta(t)$ where $\Delta(t)=\vec{v}-\vec{u}$ is a vector of integers in $\setZ^d$ called the \emph{displacement} of $t$.

This relation is extended to words $w=t_1\dots t_k$ in $T^*$ (where $t_1,\ldots,t_k\in T$) by letting $\vec{x}\xrightarrow{w}\vec{y}$ if $\vec{x},\vec{y}$ are two configurations such that there exists a sequence $\vec{c}_0,\ldots,\vec{c}_k$ of configurations satisfying
$$\vec{x}=\vec{c}_0\xrightarrow{t_1}\vec{c}_1\cdots \xrightarrow{t_k}\vec{c}_k=\vec{y}\;.$$
It follows that $\vec{y}=\vec{x}+\Delta(w)$ where $\Delta(w)\eqdef\sum_{j=1}^k\Delta(t_j)$ is the \emph{displacement} of $w$. By lifting up configurations $\vec{c}_0,\ldots,\vec{c}_k$ by a vector $\vec{z}$, we deduce the following classical fact:
\begin{ourfact}\label{fact:monotony}
  If $\vec{x}\xrightarrow{w}\vec{y}$ then $(\vec{x}+\vec{z})\xrightarrow{w}(\vec{y}+\vec{z})$ for every configuration $\vec{z}$.
\end{ourfact}
The relation is also extended over the languages $W\subseteq T^*$ by letting $\xrightarrow{W}$ denote $\bigcup_{w\in W}\xrightarrow{w}$.

\medskip

A \emph{configuration} $\vec{c}$ is said to be \emph{cyclic} if $\vec{c}\xrightarrow{T^+}\vec{c}$. In Section~\ref{sec:expspace}, we show that deciding if a configuration is cyclic is EXPSPACE complete. In this paper, we are mainly interested in a structural version of the cyclicity problem. Formally, a Petri net $T$ is said to be \emph{structurally cyclic} if every configuration $\vec{c}$ is cyclic. From Fact~\ref{fact:monotony}, it follows that a Petri net $T$ is structurally cyclic if, and only if, $\vec{0}$ is cyclic for $T$. In the sequel, we provide a deterministic polynomial time algorithm for deciding that problem. Our algorithm is based on the computation of the set $\Lambda(T)$ of transitions $t\in T$ that occur in a word $w\in T^+$ witnessing the structural cyclicity $\vec{0}\xrightarrow{w}\vec{0}$. Notice that $T$ is structurally cyclic if, and only if, $\Lambda(T)$ is nonempty. In order to compute $\Lambda(T)$, we provide two different ways for computing subsets $T'$ of $T$ that over-approximate $\Lambda(T)$, i.e., such that $\Lambda(T)\subseteq T'$. These subsets will be useful for simplifying the computation of $\Lambda(T)$ by observing that for every $T'\subseteq T$ such that $\Lambda(T)\subseteq T'$, we have $\Lambda(T)=\Lambda(T')$.

\medskip

The first over-approximation of $\Lambda(T)$ is obtained by introducing the \emph{markable indexes}. An index $i$ in $\{1,\ldots,d\}$ is said to be \emph{forward markable} for a Petri net $T$ if there exists a configuration $\vec{c}$ such that $\vec{0}\xrightarrow{T^*}\vec{c}$ and $i\in\pos{\vec{c}}$. We denote by $I_+(T)$ the set of indexes forward markable for $T$. Symmetrically, we denote by $I_-(T)$ the set of all $i\in\{1,\ldots,d\}$ that are \emph{backward markable}, i.e., such that there exists $\vec c$ with $\vec{c}\xrightarrow{T^*}\vec{0}$ and $i\in\pos{\vec{c}}$. We denote by $I(T)$ the set $I_+(T)\cap I_-(T)$. A transition $t$ in $T$ of the form $(\vec{u},\vec{v})$ with $\pos{\vec{u}}\cup\pos{\vec{v}}\subseteq I(T)$ is said to be \emph{mutually fireable}. We denote the set of all mutually fireable transitions of $T$ by $M(T)$.
\begin{lemma}\label{lem:positive}
  It holds that $\Lambda(T)\subseteq M(T)$.
\end{lemma}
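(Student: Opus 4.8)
The plan is to unwind the definition of $\Lambda(T)$ and read off, from a single cyclic run passing through $t$, that every index touched by $t$ is both forward and backward markable. Fix a transition $t=(\vec{u},\vec{v})\in\Lambda(T)$. By definition of $\Lambda(T)$ there is a word $w\in T^+$ with $\vec{0}\xrightarrow{w}\vec{0}$ in which $t$ occurs, so I would factor $w=w_1\,t\,w_2$ with $w_1,w_2\in T^*$ and expose the intermediate configurations of the run:
$$\vec{0}\xrightarrow{w_1}\vec{c}_1\xrightarrow{t}\vec{c}_2\xrightarrow{w_2}\vec{0}\;.$$

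The key observation is that firing $t=(\vec{u},\vec{v})$ from $\vec{c}_1$ to $\vec{c}_2$ means, by the semantics, that there is a configuration $\vec{z}$ with $\vec{c}_1=\vec{u}+\vec{z}$ and $\vec{c}_2=\vec{v}+\vec{z}$; hence $\vec{c}_1\geq\vec{u}$ and $\vec{c}_2\geq\vec{v}$, which yields the inclusions $\pos{\vec{u}}\subseteq\pos{\vec{c}_1}$ and $\pos{\vec{v}}\subseteq\pos{\vec{c}_2}$. Everything else is then bookkeeping on the two halves of the run around the occurrence of $t$.

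For forward markability, the prefix $\vec{0}\xrightarrow{w_1}\vec{c}_1$ witnesses $\pos{\vec{c}_1}\subseteq I_+(T)$, and the prefix $\vec{0}\xrightarrow{w_1 t}\vec{c}_2$ witnesses $\pos{\vec{c}_2}\subseteq I_+(T)$; combined with the inclusions above this gives $\pos{\vec{u}}\cup\pos{\vec{v}}\subseteq I_+(T)$. Symmetrically, the suffix $\vec{c}_2\xrightarrow{w_2}\vec{0}$ witnesses $\pos{\vec{c}_2}\subseteq I_-(T)$, and the suffix $\vec{c}_1\xrightarrow{t w_2}\vec{0}$ witnesses $\pos{\vec{c}_1}\subseteq I_-(T)$, so $\pos{\vec{u}}\cup\pos{\vec{v}}\subseteq I_-(T)$. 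Intersecting the two conclusions gives $\pos{\vec{u}}\cup\pos{\vec{v}}\subseteq I_+(T)\cap I_-(T)=I(T)$, which is exactly the condition $t\in M(T)$.

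I do not anticipate a genuine obstacle: the argument is a direct unfolding of the definitions, the only substantive ingredient being the domination $\vec{c}_1\geq\vec{u}$ and $\vec{c}_2\geq\vec{v}$ read off from the transition semantics. The single point requiring care is the choice of split of the witnessing word around $t$, so that each of $\vec{u}$ and $\vec{v}$ is simultaneously shown reachable forward from $\vec{0}$ and co-reachable back to $\vec{0}$.
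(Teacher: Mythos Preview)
Your argument is correct and follows essentially the same route as the paper: factor a witnessing cycle around an occurrence of $t$, use the transition semantics to get $\pos{\vec u}\subseteq\pos{\vec c_1}$ and $\pos{\vec v}\subseteq\pos{\vec c_2}$, and then read off forward and backward markability of $\vec c_1,\vec c_2$ from the two halves of the run. The paper's proof is marginally terser in that it observes $\pos{\vec c_1},\pos{\vec c_2}\subseteq I(T)$ in one step rather than treating $I_+$ and $I_-$ separately, but the content is identical.
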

\begin{proof}
  Let $t=(\vec u,\vec v)$ be a transition in $T$ such that:
  $$\vec{0}\xrightarrow{T^*}\vec{x}\xrightarrow{t}\vec{y}\xrightarrow{T^*}\vec{0}$$
  Observe that $\pos{\vec{x}},\pos{\vec{y}}$ are included in $I(T)$. Moreover since $\vec{x}\xrightarrow{t}\vec{y}$, there exists a configuration $\vec{z}$ such that $\vec{x}=\vec{u}+\vec{z}$ and $\vec{y}=\vec{v}+\vec{z}$. We derive that $\pos{\vec{u}}\subseteq \pos{\vec{x}}\subseteq I(T)$ and $\pos{\vec{v}}\subseteq \pos{\vec{y}}\subseteq I(T)$, which proves the lemma.
\end{proof}

\medskip

The second over-approximation of $\Lambda(T)$ is based on the notion of \emph{ultimate cyclicity}. A transition $t$ in a Petri net $T$ is said to be \emph{ultimately cyclic} if it occurs in a word $w\in T^+$ such that $\vec{c}\xrightarrow{w}\vec{c}$ for some configuration $\vec{c}$. We denote by $U(T)$ the set of ultimately cyclic transitions. By definition, $\Lambda(T)$ is contained in $U(T)$:
\begin{lemma}\label{lem:weak}
  It holds that $\Lambda(T)\subseteq U(T)$.
\end{lemma}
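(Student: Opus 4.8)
The plan is to observe that this inclusion is essentially definitional, so the proof should be a single step. Suppose $t\in\Lambda(T)$. By the definition of $\Lambda(T)$, there is a word $w\in T^+$ in which $t$ occurs and which witnesses the structural cyclicity, i.e.\ $\vec{0}\xrightarrow{w}\vec{0}$. The key remark is that $\vec{0}$ is itself a configuration, so this relation already has exactly the shape $\vec{c}\xrightarrow{w}\vec{c}$ demanded in the definition of ultimate cyclicity, namely for the particular choice $\vec{c}=\vec{0}$.

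Hence the very same word $w$ witnesses that $t$ occurs on a cyclic execution, and therefore $t\in U(T)$. No new construction on $w$ is required: in contrast to Lemma~\ref{lem:positive}, one need not analyse the supports $\pos{\vec{x}}$ of the intermediate configurations. The only point to notice is that $\Lambda(T)$ imposes the \emph{special case} $\vec{c}=\vec{0}$ of the condition defining $U(T)$, and passing from this fixed base point to an arbitrary one merely relaxes the requirement. Consequently there is no genuine obstacle here; the containment $\Lambda(T)\subseteq U(T)$ follows directly from the definitions.
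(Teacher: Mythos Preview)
Your proposal is correct and matches the paper's approach exactly: the paper simply states that ``by definition, $\Lambda(T)$ is contained in $U(T)$'' and gives no further proof, since $\vec{0}\xrightarrow{w}\vec{0}$ is already an instance of $\vec{c}\xrightarrow{w}\vec{c}$ with $\vec{c}=\vec{0}$.
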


\medskip

In Sections~\ref{sec:positive} and~\ref{sec:weak} the sets $M(T)$ and $U(T)$ are shown to be computable in deterministic polynomial time. In particular, by considering $T'\eqdef M(T)\cap U(T)$, we get an over-approximation of $\Lambda(T)$. If $T'=T$, we prove in Section~\ref{sec:theory} that $\Lambda(T)=T$. Otherwise, since $\Lambda(T)=\Lambda(T')$ we reduce the computation of $\Lambda(T)$ to that of $\Lambda(T')$ where $T'$ is strictly included in $T$. With an immediate induction, we show in Section~\ref{sec:theory} that $\Lambda(T)$ is computable in deterministic polynomial time. This complexity is shown to be optimal in that section up to logspace reductions, i.e., we prove \ptime-hardness of the structural cyclicity problem.

\section{The Cyclicity Problem}
\label{sec:expspace}
The \emph{cyclicity problem} consists in deciding if a configuration $\vec{c}$ in $\setN^d$ is cyclic. This problem takes as input a Petri net $T$ and a configuration $\vec{c}$. The following theorem shows that this problem is decidable in exponential space.
\begin{theorem}
  The cyclicity problem is EXPSPACE complete.
\end{theorem}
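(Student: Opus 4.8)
The plan is to establish the two directions of EXPSPACE completeness separately, reducing the upper bound to reversible reachability and the lower bound from coverability. For membership, I would rely on the EXPSPACE upper bound for reversible reachability from~\cite{Leroux:2013} together with the following characterization: a configuration $\vec c$ is cyclic if, and only if, there is a transition $t=(\vec u,\vec v)\in T$ with $\vec u\le\vec c$ such that $\vec c$ and $\vec c+\Delta(t)$ lie in the same strongly connected component of the reachability graph. The backward direction is immediate, since $\vec c\xrightarrow{t}\vec c+\Delta(t)\xrightarrow{T^*}\vec c$ is a word in $T^+$ witnessing cyclicity; for the forward direction, writing a witness as $w=t\,w'$ with $\vec c_1\eqdef\vec c+\Delta(t)$, one has $\vec c\xrightarrow{t}\vec c_1$ and $\vec c_1\xrightarrow{w'}\vec c$, so the two configurations share an SCC. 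The algorithm then enumerates the at most $|T|$ transitions firable at $\vec c$ (firability being the pointwise test $\vec u\le\vec c$) and, for each, queries reversible reachability on the pair $(\vec c,\vec c+\Delta(t))$, accepting iff some query succeeds. Each query runs in EXPSPACE, so the whole procedure does as well. The subtle point worth stressing is that this is legitimate precisely because the forward edge $\vec c\xrightarrow{t}\vec c_1$ is supplied by an \emph{existing} transition, so the query genuinely asks whether $\vec c_1\xrightarrow{T^*}\vec c$ \emph{within a common SCC}, which admits short witnesses, rather than asking a bare reachability question.

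For EXPSPACE-hardness I would reduce from the classical EXPSPACE-complete coverability problem (Rackoff; Lipton). Given an instance $(T,\vec s,\vec t)$, I build a net $T'$ over the places of $T$ together with three fresh control places; write $\vec p,\vec q,\vec r$ for the corresponding unit vectors and $\vec e_i$ for the unit vector on place $i$. The transitions of $T'$ are: a phase-$p$ copy $(\vec u+\vec p,\ \vec v+\vec p+\vec r)$ of each $t=(\vec u,\vec v)\in T$; a switch $t_{sw}\eqdef(\vec t+\vec p,\ \vec t+\vec q)$; a drain $d_i\eqdef(\vec e_i+\vec q,\ \vec q)$ for every place $i\in\{1,\dots,d\}\cup\{r\}$; and a finishing transition $t_{fin}\eqdef(\vec q,\ \vec s+\vec p)$. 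The start configuration is $\vec c\eqdef\vec s+\vec p$. I claim $\vec c$ is cyclic in $T'$ if, and only if, $\vec t$ is coverable from $\vec s$ in $T$. If $\vec t$ is coverable, then $\vec c$ reaches some $\vec x+k\vec r+\vec p$ with $\vec x\ge\vec t$ in phase $p$, fires $t_{sw}$ to enter phase $q$, drains all of $\{1,\dots,d\}$ and $r$ down to the configuration $\vec q$, and fires $t_{fin}$ to return to $\vec s+\vec p=\vec c$, a nonempty cycle.

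The hard part is the converse, and this is the step I expect to be the main obstacle, since Petri net transitions cannot test for zero and so cannot detect an \emph{exact} configuration at a junction. Two ideas handle this. First, the ratchet place $r$ is strictly increased by every phase-$p$ copy of a $T$-transition and can only be decreased by a drain, which requires phase $q$; hence any nonempty cycle returning to $r=0$ must fire $t_{sw}$ at least once, and $t_{sw}$ can only fire from a configuration covering $\vec t$. This rules out spurious cycles that merely reuse an internal cycle of $T$ at $\vec s$. Second, exactness at the end of the cycle is obtained for free from the cyclicity requirement itself: since $t_{fin}$ has displacement $-\vec q+\vec s+\vec p$, returning to exactly $\vec c=\vec s+\vec p$ forces the configuration immediately before $t_{fin}$ to be exactly $\vec q$, i.e.\ all original and ratchet places empty. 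Thus no zero-test transition is needed; the ``return to exactly $\vec c$'' condition of cyclicity pins down the configuration at the junction. Verifying that cycles using several $p\to q\to p$ alternations still force coverage of $\vec t$ at each switch, and that the construction is logspace, completes the reduction.
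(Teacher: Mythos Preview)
Your upper bound argument is essentially identical to the paper's: both fire one transition from $\vec c$ and reduce to at most $|T|$ instances of reversible reachability, one per successor $\vec c+\Delta(t)$ with $t$ firable at $\vec c$.

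For hardness you take a different but closely related route. The paper reduces from reachability in \emph{lossy} Petri nets using a single extra coordinate: each simulated step of $T$ increments it, a transition $s_{\mathrm{down}}=((\vec y,1),(\vec y,0))$ decrements it while pinning the original coordinates to the target $\vec y$, and $s_{\mathrm{reset}}=((\vec y,0),(\vec x,0))$ jumps back to the source; lossiness of $T$ then turns ``cover $\vec y$'' into ``reach $\vec y$''. You instead reduce from coverability and replace the reliance on lossiness by explicit drain transitions guarded by a two-phase control, at the cost of three extra places rather than one. The paper's converse also looks at the maximal prefix of pure $T$-simulation and observes that the next transition (either $s_{\mathrm{down}}$ or $s_{\mathrm{reset}}$) forces the original coordinates to dominate $\vec y$; this is exactly the ``first switch'' idea in your setting. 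Both constructions work; yours is a bit heavier but avoids assuming anything about $T$.

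One remark on your converse direction: the ``exactness before the last $t_{fin}$'' observation and the worry about multiple $p\to q\to p$ alternations are unnecessary, and ``coverage of $\vec t$ at each switch'' by itself does not give coverability in $T$ (at a second switch the original-place content need not be $T$-reachable from $\vec s$). The clean argument is: from $\vec c=\vec s+\vec p$ only phase-$p$ copies and $t_{sw}$ are enabled, so the prefix of any witnessing cycle up to the \emph{first} occurrence of $t_{sw}$ consists purely of phase-$p$ copies; hence the original-place content at that moment is $T$-reachable from $\vec s$ and, by the guard of $t_{sw}$, dominates $\vec t$.
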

\begin{proof}
  The cyclicity problem is shown to be in EXPSPACE thanks to a reduction to the reversible reachability problem. The reversible reachability problem takes as input a triple $(\vec{x},T,\vec{y})$ where $T\subseteq\setN^d\times\setN^d$ is a Petri net, $\vec{x},\vec{y}$ are configurations in $\setN^d$, and it decides if both relations $\vec{x}\xrightarrow{T^*}\vec{y}$ and $\vec{y}\xrightarrow{T^*}\vec{x}$ hold. This problem is known to be EXPSPACE complete when the vectors of the Petri net $T$ and the configurations $\vec{x},\vec{y}$ are encoded in binary~\cite{Leroux:2013}. Let us reduce the cyclicity problem to that problem. We consider a Petri net $T\subseteq \setN^d\times\setN^d$ and a configuration $\vec{x}\in\setN^d$. We introduce the set $\vec{Y}=\{\vec{y}\in\setN^d \mid \vec{x}\xrightarrow{T}\vec{y}\}$. Notice that $\vec{Y}$ contains at most $|T|$ configurations. Moreover, the configuration $\vec{x}$ is cyclic if, and only if, there exists $\vec{y}\in\vec{Y}$ such that $\vec{x}\xrightarrow{T^*}\vec{y}$ and $\vec{y}\xrightarrow{T^*}\vec{x}$. Therefore, the cyclicity problem is decidable in EXPSPACE by reduction to at most $|T|$ instances of the reversible reachability problem.

  The EXPSPACE hardness is proved thanks to a reduction of the reachability problem for lossy Petri nets. A Petri net $T\subseteq \setN^d\times\setN^d$ is said to be \emph{lossy} if $(\vec{e}_i,\vec{0})\in T$ for every $1\leq i\leq d$ where $\vec{e}_i$ is the unit vectors in $\setN^d$ defined by $\vec{e}_i(j)=1$ if $j=i$ and $\vec{e}_i(j)=0$ otherwise. Notice that a lossy Petri net $T$ satisfies $\vec{c}\xrightarrow{T^*}\vec{y}$ for all configurations $\vec{c}\geq \vec{y}$. The reachability problem for lossy Petri nets takes as input a triple $(\vec{x},T,\vec{y})$ where $T\subseteq \setN^d\times\setN^d$ is a lossy Petri net with vectors encoded in binary, and $\vec{x},\vec{y}$ are two configurations in $\setN^d$ encoded in binary as well, and it decides if $\vec{x}\xrightarrow{T^*}\vec{y}$. The reachability problem for lossy Petri nets is known to be EXPSPACE complete~\cite{CardozaLiptonMeyer:1976,Rac1978}. We reduce the reachability problem for lossy Petri nets to the cyclicity problem as follows. Let us consider a lossy Petri net $T\subseteq \setN^d\times\setN^d$ and two configurations $\vec{x},\vec{y}$ in $\setN^d$. The reduction creates a Petri net $S\subseteq \setN^{d+1}\times\setN^{d+1}$ from $T$ by adding one extra dimension. We introduce the mapping $\phi:T\rightarrow\setN^{d+1}\times\setN^{d+1}$ defined by $\phi(\vec{u},\vec{v})=((\vec{u},0),(\vec{v},1))$. This function is extended over the words in $T^*$ by $\phi(t_1\ldots t_k)=\phi(t_1)\ldots\phi(t_k)$. The Petri net $S$ is defined as follows where $s_{\mathrm{down}}\eqdef((\vec{y},1),(\vec{y},0))$, $s_{\mathrm{reset}}\eqdef ((\vec{y},0),(\vec{x},0))$, and $\phi(T)\eqdef\{\phi(t) \mid t\in T\}$:
  $$S\eqdef\{s_\mathrm{down},s_\mathrm{reset}\}\cup \phi(T).$$
  Let us prove that $\vec{x}\xrightarrow{T^*}\vec{y}$ if, and only if, $(\vec{x},0)$ is cyclic for $S$. Notice that if there exists a word $w\in T^*$ such that $\vec{x}\xrightarrow{w}\vec{y}$ then $(\vec{x},0)\xrightarrow{\pi}(\vec{x},0)$ where $\pi\eqdef\phi(w)s_{\mathrm{down}}^{|w|}s_{\mathrm{reset}}$. Thus $(\vec{x},0)$ is cyclic for $S$. Conversely, let us assume that $(\vec{x},0)$ is cyclic for $S$. If $\vec{y}\leq \vec{x}$ then $\vec{x}\xrightarrow{T^*}\vec{y}$ since $T$ is a lossy Petri net. So, we can assume that $\vec{y}\not\leq\vec{x}$. There exists a word $\pi\in S^+$ such that $(\vec{x},0)\xrightarrow{\pi}(\vec{x},0)$. Consider the maximal word $w\in T^*$ such that $\phi(w)$ is a prefix of $\pi$, and let $(\vec{c},n)\in\setN^d\times\setN$ be the configuration such that $(\vec{x},0)\xrightarrow{\phi(w)}(\vec{c},n)$. Notice that $n=|w|$ and $\vec{x}\xrightarrow{w}\vec{c}$. As $\vec{y}\not\leq\vec{x}$, the unique transition in $S$ that can be executed from $(\vec{x},0)$ is a transition in $\phi(T)$. It follows that $|w|\geq 1$. Thus $n\geq 1$. It implies that $(\vec{c},n)\not=(\vec{x},0)$. Thus $\phi(w)$ is a proper prefix of $\pi$. By maximality of $w$, it follows that $\phi(w) s_\mathrm{down}$ or $\phi(w) s_\mathrm{reset}$ is a prefix of $\pi$. In both cases, it implies that $\vec{c}\geq\vec{y}$. As $T$ is lossy, we get $\vec{c}\xrightarrow{T^*}\vec{y}$. Therefore $\vec{x}\xrightarrow{T^*}\vec{y}$. We have reduced the reachability problem for lossy Petri nets to the cyclicity problem. This problem is thus EXPSPACE hard.
\end{proof}

\section{Mutually Fireable Transitions}
\label{sec:positive}
In this section we provide a way for computing in deterministic polynomial time the set $M(T)$ of mutually fireable transitions. The following lemma will provide a way for computing $I_+(T)$, the set of forward markable indexes:
\begin{lemma}\label{lem:tt}
  Let $\vec{y}\in\setN^d$ be such that $\vec{0}\xrightarrow{T^*}\vec{y}$ and let $t=(\vec{u},\vec{v})$ be a transition in $T$ such that $\pos{\vec{u}}\subseteq \pos{\vec{y}}$. Then there exists a configuration $\vec{y}'$ in $\setN^d$ satisfying $\vec{0}\xrightarrow{T^*}\vec{y}'$ and $\pos{\vec{y}'}=\pos{\vec{v}}\cup \pos{\vec{y}}$.
\end{lemma}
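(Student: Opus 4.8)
The plan is to \emph{fire} the transition $t$ from a sufficiently large iterate of $\vec{y}$, and then read off the support of the resulting configuration. First I would fix a word $w\in T^*$ witnessing $\vec{0}\xrightarrow{w}\vec{y}$; since $\Delta(w)=\vec{y}-\vec{0}=\vec{y}$, repeated application of Fact~\ref{fact:monotony} yields $\vec{0}\xrightarrow{w^n}n\vec{y}$ for every $n\geq 1$. Note that $\pos{n\vec{y}}=\pos{\vec{y}}$ while the entries of $n\vec{y}$ on this support grow unboundedly with $n$.

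The key observation is that the hypothesis $\pos{\vec{u}}\subseteq\pos{\vec{y}}$ lets me guarantee that $t$ becomes fireable from $n\vec{y}$ for $n$ large enough. Indeed, on each index $i\in\pos{\vec{u}}$ we have $\vec{y}(i)>0$, so $n\vec{y}(i)\geq\vec{u}(i)$ once $n$ is big; on each index $i\notin\pos{\vec{u}}$ we have $\vec{u}(i)=0\leq n\vec{y}(i)$ trivially. Hence for all sufficiently large $n$ we have $n\vec{y}\geq\vec{u}$ componentwise, so $t$ is fireable and $n\vec{y}\xrightarrow{t}\vec{y}'$ where $\vec{y}'\eqdef n\vec{y}-\vec{u}+\vec{v}$. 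Concatenating, $\vec{0}\xrightarrow{w^n t}\vec{y}'$, which gives $\vec{0}\xrightarrow{T^*}\vec{y}'$ as required.

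It remains to verify $\pos{\vec{y}'}=\pos{\vec{v}}\cup\pos{\vec{y}}$ by a case analysis on each index $i$. For $i\in\pos{\vec{y}}$, the term $n\vec{y}(i)$ dominates $\vec{u}(i)$ once $n$ is large, so $\vec{y}'(i)>0$ and $i\in\pos{\vec{y}'}$. For $i\notin\pos{\vec{y}}$ we have $\vec{y}(i)=0$, and since $\pos{\vec{u}}\subseteq\pos{\vec{y}}$ also $\vec{u}(i)=0$, whence $\vec{y}'(i)=\vec{v}(i)$; thus $i\in\pos{\vec{y}'}$ precisely when $i\in\pos{\vec{v}}$. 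Combining the two cases gives exactly $\pos{\vec{y}}\cup\pos{\vec{v}}$.

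The only genuine subtlety, and the step I would treat as the main obstacle, is ensuring fireability: having $\vec{u}$ supported inside $\pos{\vec{y}}$ does not by itself make $t$ fireable from $\vec{y}$, since $\vec{y}$ may be too small on those indices. The inflation $\vec{y}\mapsto n\vec{y}$ via monotony is precisely what overcomes this, simultaneously preserving the support $\pos{\vec{y}}$ and pushing every relevant entry above the threshold imposed by $\vec{u}$. Once this is in place the support computation is routine.
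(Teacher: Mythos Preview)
Your proof is correct and follows essentially the same approach as the paper: iterate the witnessing word to reach $n\vec{y}$, choose $n$ large enough (the paper makes the explicit choice $n>\max_{i}\vec{u}(i)$) so that $t$ becomes fireable with strict slack on $\pos{\vec{y}}$, and set $\vec{y}'=n\vec{y}-\vec{u}+\vec{v}$. Your case analysis for the support is slightly more spelled out than the paper's, but the argument is the same.
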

\begin{proof}
  Let us consider a word $w$ in $T^*$ such that $\vec{0}\xrightarrow{w}\vec{y}$. By Fact~\ref{fact:monotony}, it follows that $\vec{0}\xrightarrow{w}\vec{y}\xrightarrow{w}2\vec{y}\xrightarrow{w}\cdots \xrightarrow{w}n\vec{y}$ for every $n\in\setN$. Choose $n\geq 1$ such that $\vec{u}(i)< n$ for every $i$ in $\pos{\vec{u}}$. 
  Since $\pos{\vec{u}}\subseteq \pos{\vec{y}}$, it follows that $n\vec{y}(i)\geq n$ for every $i$ in $\pos{\vec{u}}$. Thus, $\vec{z}=n\vec{y}-\vec{u}$ is a vector in $\setN^d$ such that $\vec{z}(i)>0$ for 
  every $i$ in $\pos{\vec{y}}$. We deduce that $n\vec{y}\xrightarrow{t}\vec{z}+\vec{v}$. Hence $\vec{y}'\eqdef\vec{z}+\vec{v}$ satisfies the lemma.
\end{proof}

\newcommand{\propT}{\operatorname{prop}_{T}}
Let us define the mapping $\propT$ over the sets $I\subseteq\{1,\ldots,d\}$ by:
$$\propT(I)=\bigcup_{(\vec{u},\vec{v})\in T,\ \pos{\vec{u}}\subseteq I}\pos{\vec{v}}$$
Since this mapping is monotonic for the inclusion relation $\subseteq$, it has a unique minimal fixpoint $I$ with respect to inclusion, i.e., $I$ is the minimal set such that $\propT(I)=I$. This fixpoint can be computed in deterministic polynomial time with a Kleene iteration in at most $d$ steps starting from $I_0=\emptyset$, and the induction $I_k=\propT(I_{k-1})$. The following lemma shows that this fixpoint is the set $I_+(T)$:
\begin{lemma}
  The minimal fixpoint of $\propT$ is $I_+(T)$.
\end{lemma}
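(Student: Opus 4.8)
The plan is to prove the two inclusions $\operatorname{lfp}(\propT)\subseteq I_+(T)$ and $I_+(T)\subseteq\operatorname{lfp}(\propT)$, where $I=\operatorname{lfp}(\propT)$ denotes the minimal fixpoint computed by the Kleene iteration $I_0=\emptyset$, $I_k=\propT(I_{k-1})$. Both directions are naturally arguments by induction, and the already-established Lemma~\ref{lem:tt} is exactly the engine that drives the harder direction.

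For the inclusion $I\subseteq I_+(T)$, I would argue by induction on $k$ that $I_k\subseteq I_+(T)$ for every $k$. The base case $I_0=\emptyset$ is trivial. For the inductive step, suppose $I_{k-1}\subseteq I_+(T)$ and let $i\in I_k=\propT(I_{k-1})$. By the definition of $\propT$, there is a transition $t=(\vec u,\vec v)\in T$ with $\pos{\vec u}\subseteq I_{k-1}$ and $i\in\pos{\vec v}$. The difficulty here is that having all indexes of $\pos{\vec u}$ individually markable does not immediately give a \emph{single} configuration $\vec y$ with $\vec 0\xrightarrow{T^*}\vec y$ and $\pos{\vec u}\subseteq\pos{\vec y}$; markability witnesses for different indexes may be different reachable configurations. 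This is precisely the obstacle Lemma~\ref{lem:tt} is designed to overcome: starting from a configuration reaching one markable index, it lets us fire a transition and \emph{accumulate} the supports $\pos{\vec v}\cup\pos{\vec y}$ without losing the old support. So I would first establish, by a secondary induction that repeatedly applies Lemma~\ref{lem:tt}, that for any $J\subseteq I_+(T)$ there exists a reachable configuration whose support contains $J$; applied to $J=\pos{\vec u}$ this yields $\vec y$ with $\vec 0\xrightarrow{T^*}\vec y$ and $\pos{\vec u}\subseteq\pos{\vec y}$. One more application of Lemma~\ref{lem:tt} to $t$ then produces $\vec y'$ with $\vec 0\xrightarrow{T^*}\vec y'$ and $\pos{\vec v}\subseteq\pos{\vec y'}$, whence $i\in\pos{\vec v}\subseteq\pos{\vec y'}$ gives $i\in I_+(T)$.

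For the reverse inclusion $I_+(T)\subseteq I$, I would take $i\in I_+(T)$, fix a word $w=t_1\cdots t_k$ witnessing $\vec 0\xrightarrow{w}\vec c$ with $i\in\pos{\vec c}$, and argue by induction on the length of the firing sequence. Writing $\vec 0=\vec c_0\xrightarrow{t_1}\vec c_1\cdots\xrightarrow{t_k}\vec c_k=\vec c$, I claim $\pos{\vec c_j}\subseteq I$ for every $j$. The base case is $\pos{\vec c_0}=\pos{\vec 0}=\emptyset\subseteq I$. For the step, suppose $\pos{\vec c_{j-1}}\subseteq I$ and $t_j=(\vec u,\vec v)$ with $\vec c_{j-1}=\vec u+\vec z$, $\vec c_j=\vec v+\vec z$. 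Then $\pos{\vec u}\subseteq\pos{\vec c_{j-1}}\subseteq I$, so $\pos{\vec v}\subseteq\propT(I)=I$, while $\pos{\vec z}\subseteq\pos{\vec c_{j-1}}\subseteq I$; since $\pos{\vec c_j}\subseteq\pos{\vec v}\cup\pos{\vec z}$, we get $\pos{\vec c_j}\subseteq I$. Taking $j=k$ yields $i\in\pos{\vec c}\subseteq I$.

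Combining the two inclusions gives $I=I_+(T)$. The main obstacle is entirely in the first direction: the subtlety that markability is defined index-by-index whereas the fixpoint definition of $\propT$ demands a transition fireable from a configuration whose support simultaneously covers $\pos{\vec u}$. I expect to spend the bulk of the argument on the accumulation lemma (the secondary induction invoking Lemma~\ref{lem:tt} repeatedly), since the reverse inclusion is a routine support-tracking induction.
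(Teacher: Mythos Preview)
Your proof is correct, and the inclusion $I_+(T)\subseteq I$ is handled essentially as in the paper (you are even a bit more explicit about the $\vec z$-component when closing the inductive step). The difference lies in the direction $I\subseteq I_+(T)$. The paper does not argue $I_k\subseteq I_+(T)$ index by index and then call a separate accumulation lemma; instead it \emph{strengthens the induction hypothesis} to: for every $k$ there exists a single reachable $\vec y_k$ with $\pos{\vec y_k}=I_k$. With this hypothesis the inductive step is literally a batch of applications of Lemma~\ref{lem:tt} (one per transition $(\vec u,\vec v)$ with $\pos{\vec u}\subseteq I_{k-1}=\pos{\vec y_{k-1}}$), so the ``simultaneous markability'' concern you isolate is absorbed into the main induction rather than treated as a side lemma. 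Your accumulation lemma (``every $J\subseteq I_+(T)$ lies in the support of some reachable configuration'') is true, but its natural proof is not Lemma~\ref{lem:tt}: it is a one-liner from Fact~\ref{fact:monotony}, since $\vec 0\xrightarrow{w_1}\vec y_1$ and $\vec 0\xrightarrow{w_2}\vec y_2$ give $\vec 0\xrightarrow{w_1w_2}\vec y_1+\vec y_2$. So the step you flagged as the bulk of the work is actually the easiest, and the paper's route avoids it entirely by carrying a witnessing configuration through the Kleene iteration.
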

\begin{proof}
  By induction, from Lemma~\ref{lem:tt} we derive that for every $k\in\setN$ there exists $\vec{y}_k$ such that $\vec{0}\xrightarrow{T^*}\vec{y}_k$ and $\pos{\vec{y}_k}=I_k$. Thus $\bigcup_k I_k\subseteq I_+(T)$. Conversely, let $i\in I_+(T)$. There exists a configuration $\vec{y}$ in $\setN^d$ and a word $w=t_1\dots t_k\in T^*$ such that $\vec{0}\xrightarrow{w}\vec{y}$ and $\vec{y}(i)>0$. Let $t_j=(\vec u_j,\vec v_j)$ for $j\in\{1,\dots,k\}$, and consider the sequence of configurations $\vec{c}_0,\ldots,\vec{c}_k$ in $\setN^d$ such that:
  $$\vec 0=\vec{c}_0\xrightarrow{t_1}\vec{c}_1\cdots \xrightarrow{t_k}\vec{c}_k=\vec{y}\;.$$
  Observe that $\pos{\vec{c}_0}=\emptyset=I_0$. Assume by induction that $\pos{\vec{c}_{j-1}}\subseteq I_{j-1}$ for some $j\leq k$, and let us prove that $\pos{\vec{c}_j}\subseteq I_j$. Since $\vec{c}_{j-1}\xrightarrow{t_j}\vec{c}_j$, we deduce that $\pos{\vec{u}_j}\subseteq \pos{\vec{c}_{j-1}}\subseteq I_{j-1}$. Thus $\pos{\vec{v}_j}\subseteq I_j$ since $I_j=\propT(I_{j-1})$. In particular, we have proved that $\pos{\vec{c}_k}\subseteq I_k$. Since $\vec{y}=\vec{c}_k$ and $\vec{v}(i)>0$, we deduce that $i\in I_k$. Hence, $I_+(T)\subseteq \bigcup_k I_k$.
\end{proof}

We deduce from the preceding lemma that $I_+(T)$ is computable in deterministic polynomial time. Moreover, the two previous lemmas show that there exists a configuration $\vec{y}$ in $\setN^d$ such that $\vec{0}\xrightarrow{T^*}\vec{y}$ and $\pos{\vec{y}}=I_+(T)$. For the backward case, just observe that $I_-(T)=I_+(T^{-1})$ where $T^{-1}\eqdef\{(\vec{v},\vec{u})\mid (\vec{u},\vec{v})\in T\}$. Thus, we have proved the following theorem.

\begin{theorem}\label{thm:positive}
  The set $M(T)$ of mutually fireable transitions is computable in deterministic polynomial time. Moreover, if every transition is mutually fireable, there exist configurations $\vec{x},\vec{y}$ in $\setN^d$ such that $\pos{\vec{x}}=I(T)=\pos{\vec{y}}$, and such that:
  $$\vec{y}\xrightarrow{T^*}\vec{0}\xrightarrow{T^*}\vec{x}\;.$$
\end{theorem}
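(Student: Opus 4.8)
The plan is to treat the two assertions separately. For the complexity claim, I would compute $I_+(T)$ as the minimal fixpoint of $\propT$ via the Kleene iteration described above, which terminates in at most $d$ steps and runs in deterministic polynomial time. Since $I_-(T)=I_+(T^{-1})$, the same procedure applied to $T^{-1}$ yields $I_-(T)$ in polynomial time. Intersecting gives $I(T)=I_+(T)\cap I_-(T)$, and $M(T)$ is then obtained by scanning each transition $(\vec{u},\vec{v})$ and testing whether $\pos{\vec{u}}\cup\pos{\vec{v}}\subseteq I(T)$; all of this is clearly polynomial.

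For the \emph{moreover} part, the key observation I would establish first is that the hypothesis $M(T)=T$ forces $I_+(T)=I(T)=I_-(T)$. Indeed, if every transition is mutually fireable, then $\pos{\vec{v}}\subseteq I(T)$ for each $(\vec{u},\vec{v})\in T$, so $\propT(I)\subseteq I(T)$ for every $I\subseteq\{1,\ldots,d\}$. Consequently each iterate $I_k$ in the Kleene computation of the minimal fixpoint stays inside $I(T)$, whence $I_+(T)\subseteq I(T)$. The reverse inclusion $I(T)\subseteq I_+(T)$ holds by definition of the intersection, so $I_+(T)=I(T)$; the symmetric argument applied to $T^{-1}$ gives $I_-(T)=I(T)$.

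With this in hand, I would invoke the existence result already derived from the preceding two lemmas: there is a configuration $\vec{x}$ with $\vec{0}\xrightarrow{T^*}\vec{x}$ and $\pos{\vec{x}}=I_+(T)$, which now equals $I(T)$. For the backward direction, I would apply the same existence result to $T^{-1}$ to obtain $\vec{y}$ with $\vec{0}\xrightarrow{(T^{-1})^*}\vec{y}$ and $\pos{\vec{y}}=I_+(T^{-1})=I_-(T)=I(T)$. It then remains to translate a firing sequence in $T^{-1}$ into one in $T$: since $(\vec{v},\vec{u})\in T^{-1}$ exactly when $(\vec{u},\vec{v})\in T$, reversing the word turns $\vec{0}\xrightarrow{(T^{-1})^*}\vec{y}$ into $\vec{y}\xrightarrow{T^*}\vec{0}$. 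Chaining the two reachabilities yields $\vec{y}\xrightarrow{T^*}\vec{0}\xrightarrow{T^*}\vec{x}$ with $\pos{\vec{x}}=I(T)=\pos{\vec{y}}$, as required.

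The main obstacle---indeed essentially the only non-routine point---is the collapse $I_+(T)=I(T)$ under the mutual-fireability hypothesis; without it the preceding lemmas would only furnish a configuration whose support is $I_+(T)$, which may be strictly larger than $I(T)$. Everything else is bookkeeping: the fixpoint computation, the intersection, and the reversal of firing sequences for $T^{-1}$.
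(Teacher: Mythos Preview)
Your proposal is correct and follows essentially the same approach as the paper: both arguments use the polynomial-time fixpoint computation for $I_+(T)$ (and $I_-(T)$ via $T^{-1}$), and both hinge on the observation that mutual fireability of every transition forces $\pos{\vec{v}}\subseteq I(T)$ for all $(\vec{u},\vec{v})\in T$, which yields $I_+(T)\subseteq I(T)$ and hence $I_+(T)=I(T)$. The only cosmetic difference is ordering: you first prove the set equality $I_+(T)=I(T)$ and then pick $\vec{x}$ with $\pos{\vec{x}}=I_+(T)$, whereas the paper first picks $\vec{x}$ with $\pos{\vec{x}}=I_+(T)$ and then argues $\pos{\vec{x}}\subseteq I(T)$ directly; the underlying reasoning is identical.
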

\begin{proof}
  Since $I_+(T)$ and $I_-(T)$ are computable in deterministic polynomial time, the sets $I(T)$ and $M(T)$ are computable with the same complexity. Now, assume that every transition is mutually fireable. We have proved that there exists $\vec{x},\vec{y}$ in $\setN^d$ such that $\pos{\vec{x}}=I_+(T)$, $\pos{\vec{y}}=I_-(T)$ and such that:
  $$\vec{y}\xrightarrow{T^*}\vec{0}\xrightarrow{T^*}\vec{x}$$
  Since every transition $(\vec{u},\vec{v})\in T$ satisfies $\pos{\vec{v}}\subseteq I(T)$, we deduce that $\pos{\vec{x}}\subseteq I(T)$. From this and the inclusion $I(T)\subseteq I_+(T)=\pos{\vec{x}}$, we deduce the equality $\pos{\vec{x}}=I(T)$. Symmetrically, we get $\pos{\vec{y}}=I(T)$.
\end{proof}

\section{Ultimately Cyclic Transitions}
\label{sec:weak}
In this section, the set $U(T)$ of ultimately cyclic transitions is shown to be computable in polynomial time. The \emph{displacement} of a function $\psi\colon T\rightarrow\setN$ is the vector in $\setZ^d$ defined by $\Delta(\psi)\eqdef\sum_{t\in T}\psi(t)\Delta(t)$. The following theorem follows quite immediately from linear algebra:
\begin{theorem}\label{thm:weak}
  The set $U(T)$ is computable in deterministic polynomial time. Moreover, there exists $\psi\colon T\rightarrow\setN$ such that $\Delta(\psi)=\vec{0}$ and $\psi(t)\geq 1$ for all $t\in U(T)$.
\end{theorem}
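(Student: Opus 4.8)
The plan is to characterize $U(T)$ purely in terms of the displacements $\Delta(t)$, and thereby reduce its computation to linear programming. The key observation is that ultimate cyclicity of a transition depends only on the \emph{Parikh image} (the multiset of letters) of a witnessing word, not on their order. Concretely, I would first establish the equivalence: a transition $t$ is ultimately cyclic if and only if there exists $\psi\colon T\to\setN$ with $\Delta(\psi)=\vec{0}$ and $\psi(t)\geq 1$.

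For the forward direction, if $\vec{c}\xrightarrow{w}\vec{c}$ with $t$ occurring in $w\in T^+$, then letting $\psi$ count the occurrences of each transition in $w$ gives $\Delta(\psi)=\Delta(w)=\vec{0}$ (since the displacement of a word depends only on its multiset of letters) and $\psi(t)\geq 1$. For the converse, given such a $\psi$, fix any word $w=t_1\cdots t_k$ whose letter counts are $\psi$; it is nonempty and contains $t$. Since $\Delta(w)=\Delta(\psi)=\vec{0}$, I would pick $\vec{c}$ large enough that $\vec{c}+\Delta(t_1\cdots t_j)\geq\vec{0}$ for every prefix, namely with $\vec{c}(i)$ dominating $-\Delta(t_1\cdots t_j)(i)$ over all $j$. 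By Fact~\ref{fact:monotony}, $w$ is then firable from $\vec{c}$ and returns to $\vec{c}+\Delta(w)=\vec{c}$, so $t\in U(T)$. This realizability step, where the monotonicity of the semantics is essential, is the heart of the argument; the rest is bookkeeping.

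Having reduced ultimate cyclicity to the condition $\{\psi\geq\vec{0}:\ \Delta(\psi)=\vec{0},\ \psi(t)\geq 1\}$, I would note that the displacements form the columns of an integer matrix and that the existence of a nonnegative \emph{integer} solution with $\psi(t)\geq 1$ is equivalent to the existence of a nonnegative \emph{rational} solution with $\psi(t)>0$: the solution set is a rational cone, invariant under positive scaling, so any rational witness can be scaled (by a common denominator) to an integer one. Deciding feasibility of this system over the nonnegative rationals is a linear programming feasibility problem of size polynomial in the size of $T$, hence solvable in deterministic polynomial time. Running one such test for each of the $|T|$ transitions computes $U(T)$ in polynomial time.

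Finally, for the ``moreover'' part I would assemble a single witness by superposition: for each $t\in U(T)$ fix an integer solution $\psi_t$ with $\Delta(\psi_t)=\vec{0}$ and $\psi_t(t)\geq 1$, and set $\psi\eqdef\sum_{t\in U(T)}\psi_t$. Then $\Delta(\psi)=\sum_{t\in U(T)}\Delta(\psi_t)=\vec{0}$ by additivity of $\Delta$, and $\psi(t)\geq\psi_t(t)\geq 1$ for every $t\in U(T)$, as required. I expect the converse direction of the characterization to be the main obstacle; once order-independence and the lifting afforded by Fact~\ref{fact:monotony} are secured, both the polynomial-time computation and the construction of $\psi$ follow routinely from linear programming and the linearity of $\Delta$.
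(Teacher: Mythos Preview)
Your proposal is correct and follows essentially the same approach as the paper: both reduce ultimate cyclicity to the existence of a nonnegative Parikh vector with zero displacement containing $t$, pass between integer and rational solutions by scaling, invoke linear programming feasibility per transition for the polynomial-time bound, and obtain the global witness $\psi$ by summing the individual $\psi_t$'s. The only cosmetic difference is that the paper states the characterization over $\setQ_{\geq 0}$ from the outset, whereas you state it over $\setN$ and then argue equivalence with the rational version; the content is identical.
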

\begin{proof}
  Let us first show that a transition $t$ in $T$ is ultimately cyclic if, and only if, there a function $\psi\colon T\rightarrow\setQ_{\geq 0}$ (where $\setQ_{\geq 0}$ is the set of non-negative rational numbers) such that $\sum_{t'\in T}\psi(t')\Delta(t')=\vec{0}$ and $\psi(t)>0$. Naturally, if $t$ is ultimately cyclic, then there exists a configuration $\vec{c}$ and a word $w$ in $T^+$ such that $\vec{c}\xrightarrow{w}\vec{c}$. It follows that $\Delta(w)=\vec{0}$ and $t$ occurs in $w$. Let $\psi\colon T\rightarrow\setN$ be the Parikh image of $w$, i.e. $\psi(t')$ is the number of times a transition $t'$ occurs in $w$. Observe that $\Delta(w)=\Delta(\psi)$ and $\psi(t)>0$. Conversely, assume that there is a function $\psi\colon T\rightarrow\setQ_{\geq 0}$ such that $\sum_{t'\in T}\psi(t')\Delta(t')=\vec{0}$ and $\psi(t)>0$. By multiplying $\psi$ by the least common multiple of
 the denominators, we can assume that $\psi$ ranges over the natural numbers. There exists a word $w$ in $T^*$ such that $\psi$ is the Parikh image of $w$. Observe that $\Delta(w)=\Delta(\psi)=\vec{0}$. Now, just observe that there exists a configuration $\vec{c}$ large enough such that $\vec{c}\xrightarrow{w}\vec{c}+\Delta(w)=\vec{c}$. Thus $t$ is ultimately cyclic.

It follows that $U(T)$ is computable in deterministic polynomial time since the membership of a transition $t$ in $U(T)$ reduces to the satisfiability of a linear system of equations over the rational numbers. Moreover, notice that for every $t\in U(T)$ there exists $\psi_t\colon T\rightarrow\setN$ such that $\Delta(\psi_t)=\vec{0}$ and $\psi_t(t)\geq 1$. It follows that $\psi\eqdef\sum_{t\in W(T)}\psi_t$ satisfies the second statement of the theorem.
\end{proof}

\section{Characterization}
\label{sec:theory}
Lemmas~\ref{lem:positive} and~\ref{lem:weak} show that $\Lambda(T)\subseteq M(T)\cap U(T)$. When $M(T)\cap U(T)$ is equal to $T$, the following theorem shows that $\Lambda(T)=T$. The proof of this theorem is inspired by Kosaraju's approach~\cite{Kos1982} for deciding the general reachability problem for Petri nets.
\begin{theorem}\label{thm:main}
  We have $\Lambda(T)=T$ for every Petri net $T$ satisfying $T=M(T)\cap U(T)$.
\end{theorem}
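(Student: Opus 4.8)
The plan is to combine the two over-approximations into a single cycle at $\vec 0$ that uses every transition at least once; exhibiting such a word $w\in T^+$ with $\vec 0\xrightarrow{w}\vec 0$ immediately shows that every $t\in T$ occurs in a structural-cyclicity witness, i.e.\ $\Lambda(T)=T$. First I would extract the building blocks. Since $T=M(T)$, Theorem~\ref{thm:positive} supplies configurations $\vec x,\vec y$ with $\pos{\vec x}=I(T)=\pos{\vec y}$ and words $\alpha,\beta\in T^*$ with $\vec 0\xrightarrow{\alpha}\vec x$ and $\vec y\xrightarrow{\beta}\vec 0$. Since $T=U(T)$, Theorem~\ref{thm:weak} supplies $\psi\colon T\to\setN$ with $\Delta(\psi)=\vec 0$ and $\psi(t)\geq 1$ for all $t$. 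Note that mutual fireability forces each $\Delta(t)$ to be supported on $I(T)$, so throughout the construction all coordinates outside $I(T)$ stay $0$ and may be ignored.

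The key step is to build a \emph{bridge} word $\eta\in T^+$ that converts $\vec x$ into $\vec y$ at the level of displacement while still covering every transition. The vector $\vec y-\vec x=-\Delta(\beta\alpha)$ is an integer combination $\sum_{t}c_t\,\Delta(t)$ of the displacements. Adding a large multiple of $\psi$, which contributes nothing to the displacement, replaces the coefficients by $c_t+K\psi_t$, which are $\geq 1$ for $K$ large since $\psi(t)\geq 1$; the resulting nonnegative integer vector is the Parikh image of some $\eta\in T^+$ using every transition, with $\Delta(\eta)=\vec y-\vec x$. Exactly as in the proof of Theorem~\ref{thm:weak}, $\eta$ is fireable from some fixed threshold configuration $\vec C_0$ that I may take supported on $I(T)$.

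Finally I would close the cycle by pumping, using Fact~\ref{fact:monotony} throughout. Lifting gives $\vec 0\xrightarrow{\alpha^p}p\vec x$ and $p\vec y\xrightarrow{\beta^p}\vec 0$ for every $p$. Firing $\eta$ successively $p$ times starting from $p\vec x$ passes through the configurations $(p-k)\vec x+k\vec y$ for $k=0,\dots,p$, ending at $p\vec y$, since $\Delta(\eta)=\vec y-\vec x$. For $p$ large these intermediate configurations dominate $\vec C_0$ on $I(T)$ (each coordinate in $I(T)$ is at least $p$, because each coordinate of $\vec x$ and of $\vec y$ in $I(T)$ is at least $1$), so by monotony each copy of $\eta$ is fireable from the current marking and all markings stay nonnegative. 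Hence $\vec 0\xrightarrow{\alpha^p\eta^p\beta^p}\vec 0$ with total displacement $p\vec x+p(\vec y-\vec x)-p\vec y=\vec 0$, and $\alpha^p\eta^p\beta^p\in T^+$ already contains every transition inside the $\eta^p$ block. This yields $\Lambda(T)=T$.

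I expect the main obstacle to be precisely the bridge step of the third paragraph: turning the displacement identity for $\vec y-\vec x$ into a \emph{genuine} firing sequence that carries the pumped configuration $p\vec x$ down to $p\vec y$ while remaining nonnegative at every intermediate marking. This is where the forward/backward markability provided by $M(T)=T$ (which guarantees the large supply of tokens on $I(T)$) must be made to cooperate with the full-support zero-displacement vector provided by $U(T)=T$; the positivity of $\vec x,\vec y$ on all of $I(T)$ is exactly the ingredient that makes the pumped markings large enough to keep $\eta$ fireable.
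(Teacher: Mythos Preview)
Your proposal is correct and follows essentially the same approach as the paper: both proofs pump $\vec 0\xrightarrow{\alpha^p}p\vec x$ and $p\vec y\xrightarrow{\beta^p}\vec 0$, construct a bridge word of displacement $\vec y-\vec x$ by subtracting the Parikh images of $\alpha,\beta$ from a large multiple of the full-support zero-displacement vector $\psi$, and fire the bridge $p$ times along the convex path $(p-k)\vec x+k\vec y$, using that these configurations are $\geq p$ on $I(T)$ to guarantee fireability by monotony. The only differences are cosmetic (names, and the paper merges your constants $K$ and $p$ into a single $n$).
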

\begin{proof}
  Since $T=M(T)$, Theorem~\ref{thm:positive} shows that there exist two words $w_+,w_-$ in $T^*$ and two configurations $\vec{x},\vec{y}$ such that $\pos{\vec{x}}=I(T)=\pos{\vec{y}}$ and such that:
  $$\vec{y}\xrightarrow{w_-}\vec{0}\xrightarrow{w_+}\vec{x}\;.$$
  Fact~\ref{fact:monotony} shows that for every $n\in\setN$, we have:
  $$n\vec{y}\xrightarrow{w_-^n}\vec{0}\xrightarrow{w_+^n}n\vec{x}\;.$$
  We denote by $\psi_+$ and $\psi_-$ the Parikh image of $w_+$ and $w_-$, resp. Since $T=U(T)$, Theorem~\ref{thm:weak} shows that there exists $\psi_0\colon T\to \setN\moins\{0\}$ such that $\Delta(\psi_0)=\vec{0}$. By replacing $\psi_0$ by $n\psi_0$ with $n\geq 1$ large enough, we can assume without loss of generality that $\psi_0(t)\geq \psi_+(t)+\psi_-(t)$ for every $t\in T$. Let us consider the function $\psi\colon T\to\setN$ satisfying $\psi_+(t)+\psi_-(t)+\psi(t)=\psi_0(t)$ for every $t\in T$. Choose any word $w$ in $T^*$ whose Parikh image is $\psi$. Then we have:
  \begin{eqnarray*}
    \Delta(\psi_+)&=&\Delta(w_+)=\vec{x}\\
    \Delta(\psi_-)&=&\Delta(w_-)=-\vec{y}\\
    \Delta(\psi)&=&\Delta(w)\\
    \Delta(\psi_0)&=&\vec{0}
  \end{eqnarray*}
  We derive from $\psi_++\psi_-+\psi=\psi_0$ the equality $\Delta(\psi_+)+\Delta(\psi_-)+\Delta(\psi)=\Delta(\psi_0)$. It follows that $\vec{x}+\Delta(w)=\vec{y}$. Now, let us consider $\vec{z}\in\{0,1\}^d$ such that $\pos{\vec{z}}=I(T)$. From $\pos{\vec{x}}=I(T)$ it follows that $\vec{x}\geq \vec{z}$. Symmetrically, from $\pos{\vec{y}}=I(T)$ we derive $\vec{y}\geq \vec{z}$. Moreover, since every transition $(\vec{u},\vec{v})\in T$ satisfies $\pos{\vec{u}}\cup\pos{\vec{v}}\subseteq I(T)$, we deduce that there exists $n\geq 1$ large enough such that $n\vec{z}\xrightarrow{w}n\vec{z}+\Delta(w)$. Let us introduce the sequence $\vec{c}_0,\ldots,\vec{c}_n$ of configurations in $\setN^d$ defined by $\vec{c}_j=(n-j)\vec{x}+j\vec{y}$. As $\vec{x},\vec{y}\geq\vec{z}$, we deduce that $\vec{c}_j\geq n\vec{z}$ for every $0\leq j\leq n$. Hence, from $n\vec{z}\xrightarrow{w}n\vec{z}+\Delta(w)$, Fact~\ref{fact:monotony} provides the relation $\vec{c}_{j-1}\xrightarrow{w}\vec{c}_{j-1}+\Delta(w)$. As $\vec{c}_{j-1}+\Delta(w)=\vec{c}_j$, we deduce that $\vec{c}_0\xrightarrow{w^n}\vec{c}_n$. From $\vec{c}_0=n\vec{x}$ and $\vec{c}_k=n\vec{y}$, we obtain:
  $$\vec{0}\xrightarrow{w_+^n w^n w_-^n}\vec{0}$$
  Therefore transitions occurring in $w_+w w_-$ are in $\Lambda(T)$. Notice that the Parikh image of this word is $\psi_0$ which satisfies $\psi_0(t)\geq 1$ for every $t\in T$. Hence $T\subseteq \Lambda(T)$.
\end{proof}

\begin{theorem}\label{thm:Lambda}%
  The set $\Lambda(T)$ is computable in deterministic polynomial time.
\end{theorem}
\begin{proof}
  We associate to every Petri net $T$ the Petri net $\mu(T)\eqdef M(T)\cap U(T)$. Theorems~\ref{thm:positive} and \ref{thm:weak} show that $\mu(T)$ is computable in polynomial time. 
  Lemmas~\ref{lem:positive} and \ref{lem:weak} show that $\Lambda(T)\subseteq \mu(T)$. It follows that $\Lambda(T)=\Lambda(\mu(T))$. In particular, the sequence $T_0,T_1,\ldots$ of Petri nets defined inductively by $T_0=T$ and $T_{n+1}=\mu(T_n)$ ($n\ge 0$) satisfies $\Lambda(T_n)=\Lambda(T)$. Since this sequence is non-increasing for the inclusion relation, there exists $n\leq |T|$ such that $T_{n+1}=T_n$. In that case $\mu(T_n)=T_n$ and Theorem~\ref{thm:main} shows that $\Lambda(T_n)=T_n$. It follows that $\Lambda(T)=T_n$ is computable in deterministic polynomial time.
\end{proof}


Theorem~\ref{thm:Lambda} shows that structural cyclicity can be decided in deterministic polynomial time. In fact, one can easily show that it is, in fact, \ptime-hard as well.\footnote{As usual, \ptime\ denotes the set of all decision problems that can be solved in deterministic polynomial time.}

\begin{theorem}\label{thm:P-hard}%
The structural cyclicity problem is \ptime-hard (under logarithmic space reductions) even with a unary encoding of numbers.
\end{theorem}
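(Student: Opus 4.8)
The plan is to reduce from the Monotone Circuit Value Problem (MCVP)---given a monotone Boolean circuit whose inputs are fixed to $0$ or $1$, decide whether the output gate evaluates to $1$---which is \ptime-complete under logarithmic space reductions. From such a circuit I would build a Petri net $S$ whose places are the gates $g$ (written $p_g$) together with one extra \emph{catalyst} place $c$, arranged so that $S$ is structurally cyclic if and only if the output gate evaluates to $1$. Since every vector I use has entries bounded by a fixed constant, the construction is logspace-computable and stays hard even under a unary encoding.

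The transitions of $S$ fall into four groups. The \emph{up}-transitions encode evaluation and are catalytic in their inputs, so each has displacement a single positive unit vector: for an input gate $g$ fixed to $1$, the transition $(\vec 0,\vec e_{p_g})$; for an OR-gate $g=g_1\vee g_2$, the transitions $(\vec e_{p_{g_1}},\vec e_{p_{g_1}}+\vec e_{p_g})$ and $(\vec e_{p_{g_2}},\vec e_{p_{g_2}}+\vec e_{p_g})$; and for an AND-gate $g=g_1\wedge g_2$, the transition $(\vec e_{p_{g_1}}+\vec e_{p_{g_2}},\vec e_{p_{g_1}}+\vec e_{p_{g_2}}+\vec e_{p_g})$. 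A single \emph{switch} transition $(\vec e_{p_{\mathrm{out}}},\vec e_{p_{\mathrm{out}}}+\vec e_c)$ lights the catalyst once the output place is marked. For every up-transition $t=(\vec u,\vec v)$ there is a \emph{down}-transition $(\vec v+\vec e_c,\vec u+\vec e_c)$ that reverses $t$ but requires the catalyst to be present, and a final \emph{clear} transition $(\vec e_c,\vec 0)$ removes the catalyst.

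Correctness rests on the observation, already embodied in Section~\ref{sec:positive}, that the forward-markable set $I_+(S)$ restricted to gate places is exactly the least fixpoint of $\propT$, and that for this family of rules this fixpoint is precisely the set of gates evaluating to $1$; hence $p_{\mathrm{out}}$ is forward markable iff the circuit outputs $1$. Crucially, the switch, down, and clear transitions are all disabled until the catalyst appears, and the catalyst can appear only after $p_{\mathrm{out}}$ is marked, so they do not perturb this markability computation. If the output is $1$, I exhibit a witness $\vec 0\xrightarrow{w}\vec 0$ by firing a forward run that marks $p_{\mathrm{out}}$, then the switch, then the same run reversed through the down-transitions (the catalyst persisting throughout), and finally the clear transition; the displacements telescope to $\vec 0$ and $w\in S^+$. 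Conversely, if the output is $0$ then $p_{\mathrm{out}}$ is never markable, so the switch never fires, the catalyst is never produced, and every down- and clear-transition remains permanently disabled; thus only up-transitions can ever fire from $\vec 0$, and since each has strictly positive displacement, no nonempty run returns to $\vec 0$.

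The main obstacle---and the sole reason for the catalyst---is to block \emph{trivial} cyclicity. Without gating, the reverse of an input transition $(\vec 0,\vec e_{p_g})$ would be $(\vec e_{p_g},\vec 0)$, yielding the cycle $\vec 0\to\vec e_{p_g}\to\vec 0$ regardless of the circuit. Forcing every teardown step to carry the catalyst, which is lit only after $p_{\mathrm{out}}$ is reached, ties the existence of any cycle to the reachability of the output place, i.e.\ to the monotone evaluation we wish to capture. Verifying that this single device simultaneously (i) leaves the forward markability of the gate places undisturbed and (ii) rules out all spurious cycles is the delicate point of the argument; once both are checked, the \ptime-hardness follows.
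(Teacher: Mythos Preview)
Your reduction from the Monotone Circuit Value Problem is correct, but it is a genuinely different route from the paper's proof. The paper reduces instead from the problem ``does a context-free grammar $G$ generate $\varepsilon$?'': each production $l\to r$ (over a terminal-free alphabet $N=\{1,\dots,d\}$) becomes the transition $(\psi(l),\psi(r))$ of Parikh images, together with a single start transition $t_0=(\vec 0,\psi(S))$; the resulting net is structurally cyclic iff $\varepsilon\in L(G)$. No catalyst or gating gadget is needed there, because grammar productions already \emph{consume} their left-hand side, so a derivation $S\Rightarrow^*\varepsilon$ translates directly into $\vec 0\xrightarrow{t_0}\psi(S)\xrightarrow{T_0^*}\vec 0$, and conversely any cycle can be rearranged to begin with its $t_0$-occurrences. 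Your MCVP encoding must compensate for the fact that monotone evaluation only \emph{produces} tokens; the catalyst you introduce is precisely what ties any teardown to a successful evaluation of the output, and your two-direction argument goes through. (One phrasing slip: the switch is not ``disabled until the catalyst appears''---it is disabled until $p_{\mathrm{out}}$ is marked and is itself the producer of the catalyst; your subsequent reasoning uses the correct dependency.) The paper's construction is shorter; yours makes the link with the forward-markability fixpoint $I_+$ of Section~\ref{sec:positive} more transparent.
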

\begin{proof}
We prove this theorem by a reduction of the following problem for context-free grammar languages: \emph{Given a context-free grammar $G$, does the language $L(G)$ generated by $G$ contain the empty word $\varepsilon$?} This problem is known to be \ptime-hard (see, e.g.,~\cite[Section~4]{Lange:2011}).

Let $G=(N,\Sigma,P,S)$ be a context-free grammar, where $N$, $\Sigma$, and $P$ are the sets of nonterminals, terminals, and productions, resp., and $S\in N$ is the initial nonterminal. We may assume that $N=\{1,\dots,d\}$, $S=1$, and $\Sigma=\emptyset$. Let $\psi(w)$ be the Parikh image of a word $w\in N^*$. We construct a Petri net $T$ with $d$ dimensions, as follows. $T$ consists of the sub-net $T_0=\{(\psi(l),\psi(r))\mid (l\to r)\in P\}$, and the additional transition $t_0=(\vec 0,\psi(S))=(\vec 0,(1,0,\dots,0))$. Note that the size of $T$ is polynomial in the size of $G$ even under a unary encoding of numbers.

We prove the correctness of the reduction.

Suppose that $S\xrightarrow{P^*} \varepsilon$, where $\xrightarrow{P^*} $ denotes the reflexive and transitive closure of the derivation relation $\xrightarrow P$ of $G$. By construction, for all $u,v\in N^*$, $u\xrightarrow P v$ implies $\psi(u)\xrightarrow{T_0}\psi(v)$. Hence, $\vec 0\xrightarrow{t_0}\psi(S)\xrightarrow{T_0^*}\psi(\varepsilon)=\vec 0$, as required.

For the other direction, let us first note the obvious fact that $\varepsilon\in L(G)$ if, and only if, $S^n\xrightarrow{P^*} \varepsilon$ for some $n>0$. (This follows easily from context-freeness, because every nonterminal appearing in such a derivation is a descendant of a unique one of the $n$ initial occurrences of $S$. Thus, deleting all of them except for the descendants of the first $S$ yields a derivation $S\xrightarrow{P^*} \varepsilon$.) Now, assume that $\vec 0\xrightarrow w\vec 0$ for some $w\in T^+$.  Since $\vec x\xrightarrow{tt_0}\vec y$ implies $\vec x\xrightarrow{t_0t}\vec y$, occurrences of $t_0$ in $w$ can be reordered at the beginning of $w$. So, without loss of generality, we may assume that $w=t_0^n w'$ where $n\geq 0$ and $w'\in T_0^*$. As $\psi(l)>\vec{0}$ for every $(l\to r)\in P$, it follows that $n>0$. Thus, $(n,0,\dots,0)\xrightarrow {w'}\vec 0$. However, the construction of $T_0$ readily implies the following for every word $u\in N^*$: if $\psi(u)\xrightarrow{T_0}\vec y$ for a vector $\vec y$, then there is a word $v\in N^*$ such that $\psi(v)=\vec y$ and $u\xrightarrow P v$. Hence, by induction on $|w'|$ it follows that $S^n\xrightarrow{P^*} v$ for a word $v$ with $\psi(v)=\vec 0$, i.e., we have $S^n\xrightarrow{P^*} \varepsilon$. This completes the proof.
\end{proof}

Combining the above results, we obtain the main result of this paper:

\begin{corollary}
The structural cyclicity problem is \ptime-complete, regardless of whether numbers are encoded in binary or unary. 
\end{corollary}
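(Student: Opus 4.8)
The plan is to assemble the corollary from the two complexity bounds already established, paying attention to how those bounds transfer between the unary and binary encodings of the numbers occurring in $T$.

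First I would settle membership in \ptime. Theorem~\ref{thm:Lambda} computes $\Lambda(T)$ in deterministic polynomial time (in the binary size of $T$), and the discussion in Section~\ref{sec:petrinets} records that $T$ is structurally cyclic if, and only if, $\Lambda(T)\neq\emptyset$. Thus the decision procedure is simply to compute $\Lambda(T)$ and test it for nonemptiness, which runs in time polynomial in the binary size of $T$. Since the unary size of $T$ is never smaller than its binary size, the very same algorithm runs in time polynomial in the input size under the unary encoding as well. Hence structural cyclicity lies in \ptime\ for both encodings.

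Next I would handle \ptime-hardness. Theorem~\ref{thm:P-hard} already gives hardness (under logarithmic space reductions) for the unary encoding, so that case is immediate. To obtain hardness for the binary encoding it suffices to reduce the unary-encoded problem to the binary-encoded one: a logspace transducer reads each number presented in unary and outputs its binary representation, leaving the underlying instance, and hence its answer, unchanged. Converting a unary number to binary is logspace-computable because the binary length is logarithmic in the input length, so the output digits can be produced using a logarithmically bounded counter. Composing this transducer with the reduction of Theorem~\ref{thm:P-hard}, and using that logspace reductions compose, yields the desired \ptime-hardness under the binary encoding.

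Combining the two bounds gives \ptime-completeness under each encoding, which is exactly the statement of the corollary. I expect no real obstacle here; the only point demanding care is the asymmetry between the two encodings, namely that unary is the larger representation. This asymmetry makes membership in \ptime\ propagate from binary to unary (a smaller input only helps the algorithm), while hardness propagates from unary to binary (the unary instance can be cheaply re-encoded in binary). Keeping these two directions straight is the one place where a careless argument could go astray.
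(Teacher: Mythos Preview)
Your proposal is correct and matches the paper's approach: the corollary is stated without proof, merely as the combination of Theorem~\ref{thm:Lambda} (membership) and Theorem~\ref{thm:P-hard} (hardness), and your write-up simply spells out the easy encoding-transfer arguments that the paper leaves implicit.
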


\section{Conclusion}
\label{sec:conclusion}
In this paper, the structural cyclicity problem has been defined and proved to be decidable in deterministic polynomial time, using a technique inspired by Kosaraju's approach~\cite{Kos1982}. Whereas this approach is non-primitive recursive for deciding the general reachability problem for Petri nets, to our knowledge, this is the first time it is used for deriving a polynomial time algorithm for a Petri net problem.



\bibliographystyle{plain} 
\bibliography{thisbiblio}


\end{document}